\documentclass[hidelinks,onefignum,onetabnum]{siamart251104}

\usepackage{amsfonts}
\usepackage{amsmath,amssymb}
\usepackage{algorithm}
\usepackage{algpseudocode}
\crefname{algorithm}{Algorithm}{Algorithms}
\Crefname{algorithm}{Algorithm}{Algorithms}
\hypersetup{pdftitle={Stable complete coordinates for multisets of points via basic r-symmetric tropical polynomials},
  pdfauthor={S. Kubo}}

\newcommand{\rmath}{\ensuremath{r}}
\newcommand{\R}{\mathbb{R}}
\newcommand{\N}{\mathbb{N}}
\newcommand{\supp}{\operatorname{supp}}
\newcommand{\proj}{\operatorname{proj}}

\newsiamremark{remark}{Remark}
\newsiamthm{example}{Example}
\theoremstyle{nonumberplain}\newtheorem{theoremstar}{Theorem}\theoremstyle{plain}

\headers{Tropical coordinates for multisets of points}{S. Kubo}

\title{Stable complete coordinates for multisets of points\\
 via basic \rmath-symmetric tropical polynomials\thanks{Submitted to the
 editors \today.\funding{This work
 was partially supported by MEXT Leading Initiative for Excellent Young
 Researchers Grant Number JPMXS0320200347 and JSPS KAKENHI Grant Number
 JP26K17039.}}}

\author{Susumu Kubo\thanks{Faculty of Informatics, Showa Women's
 University, Tokyo, Japan (\email{s-kubo@swu.ac.jp}),
 \url{https://orcid.org/0000-0003-1780-9677}.}}

\begin{document}
\maketitle

\begin{abstract}
A multiset of $n$ unordered points in $\R^r$ --- a point cloud, or, for
$r=2$, a persistence barcode of birth--death pairs --- is a point of
the orbit space $\R^{nr}/S_n$ for the symmetric group $S_n$ permuting
the rows of an $n\times r$ matrix; a separating family of invariants on
this space is exactly a complete set of permutation-independent
coordinates. We provide one that is explicit, small, and stable, in the
max-plus (tropical) setting: for all $n\ge1$ and $r\ge1$, the
$\binom{n+r}{r}$ \emph{basic $r$-symmetric tropical polynomials}, of
degree at most $n$, separate the orbits of $S_n$ on $\R^{nr}$. This
settles in full a problem left open in [Kubo, J.~Pure Appl.\ Algebra
223 (2019) 72--85], where separation was known only for $r=2$ and
special cases of $r\ge3$, and yields a family far smaller and of lower
degree than the general separating sets from Derksen's recent theory of
tropical invariants for permutation actions ($nr+(nr)!/n!$ invariants
of degree $O(n^2r^2)$). The proof is elementary and constructive: the
basic values are identified with a transportation problem, and the
multiset is recovered from the dual by an explicit algorithm. We
further show the coordinate map is a bi-Lipschitz embedding for all
$n$ and $r$, being an injective max filter bank (via the bi-Lipschitz
theory of max filtering), with an explicit Lipschitz constant for the
forward bound and a fully explicit, dimension-free distortion when
$r=1$. Finally we determine when the pairwise values suffice (exactly
$n\le3$) and show that invariants on at least three columns and of
degree less than $n$ are necessary in general, the obstruction being a
standard non-uniqueness configuration from discrete tomography.
\end{abstract}

\begin{keywords}
max-plus algebra, tropical polynomial, separating invariants,
symmetric group, multisets of points, persistence barcodes,
bi-Lipschitz embedding
\end{keywords}

\begin{MSCcodes}
14T90, 16Y60, 05E05, 55N31
\end{MSCcodes}

\section{Introduction}\label{sec:intro}

Many objects in applied geometry are \emph{multisets} of points: a
point cloud is a multiset of $n$ points in $\R^r$, and the special case
$r=2$ includes the persistence diagram (barcode) of a filtration, a
multiset of birth--death pairs in $\R^2$. Recording such
an object as a matrix of coordinates and forgetting the order of the
points amounts to passing to the orbit space $\R^{nr}/S_n$ of the
symmetric group acting by row permutations. A \emph{separating} family
of invariants on this space is then precisely a complete set of
permutation-independent coordinates, turning an unordered configuration
into a finite vector of numbers; if the coordinates are moreover stable
under perturbations, they are well suited to downstream computation and
learning. This is the viewpoint of coordinates on barcode spaces
\cite{CK}, of separating invariants for group actions \cite{Derksen},
and of stable group-invariant embeddings such as max filtering
\cite{CIMP}. We provide such coordinates --- small, explicit, complete,
and Lipschitz-stable --- in the max-plus (tropical) setting.

Tropical algebra, or max-plus algebra, replaces the sum $a+b$ of two
real numbers by the maximum $\max(a,b)$ and the product $ab$ by the
sum $a+b$. Invariant
theory in the tropical setting is a young subject. Matsui
\cite{Matsui} ultradiscretized Weyl's basic invariants, recovered the
order statistics in the single-attribute case, and anticipated that in
general the resulting tropical invariants determine a point multiset up
to permutation; Carlsson and Kali\v{s}nik \cite{CK} introduced
symmetric and $r$-symmetric tropical polynomials and rational
functions, and a separating family among them, motivated by coordinates
on spaces of persistence barcodes. This \emph{basic} family for the row
permutation action was placed on a formal footing in \cite{Kubo}, whose
setting we now recall. Separating these orbits is a problem of interest
well beyond tropical algebra, from persistence barcodes \cite{CK} to
the general theory of separating invariants for group actions
\cite{Derksen}.

Arrange $nr$ variables in an $n\times r$ matrix
\[
M=\begin{pmatrix}
m_{11} & \cdots & m_{1r}\\
\vdots & & \vdots\\
m_{n1} & \cdots & m_{nr}
\end{pmatrix},
\]
and let the symmetric group $S_n$ act by permuting the rows. A
tropical polynomial in these variables is \emph{$r$-symmetric} if it is
invariant under this action. The \emph{basic $r$-symmetric tropical
polynomials} $b_{q_1,\dots,q_r}$, indexed by nonnegative integers
$q_1,\dots,q_r$ whose sum is at most $n$, are the ultradiscretizations
(tropicalizations) of Weyl's
typical basic invariants \cite{Weyl}; following Matsui \cite{Matsui},
they were defined in \cite[Definition~3]{Kubo}; their
values admit the combinatorial description \eqref{eq:bvalue} given in
\cref{sec:prelim}.
The main theorem of \cite{Kubo} states that for $r=2$ the basic
polynomials separate the orbits of $S_n$ on $\R^{2n}$, i.e.\ they give
coordinates on $\R^{2n}/S_n$, improving the separating family of
\cite{CK}; for $r\ge3$, separation was established whenever some
column, or some difference of two columns, has pairwise distinct
entries (\cite[Propositions~13, 14]{Kubo}), and the general
case was raised as an open problem in the concluding discussion.

Interest in such questions has been renewed by recent work of Derksen
\cite{Derksen} on tropical invariants of arbitrary permutation group
actions. Two of its results are directly relevant. First,
for a subgroup $G\le S_N$ acting on $N$ variables, the semiring of
invariant tropical polynomials is finitely generated if and only if $G$
is generated by transpositions; for the row permutation action of
$S_n$ on $N=nr$ variables with $r\ge2$ and $n\ge2$ this fails, so
\emph{no finite generating set exists}, a fact already noted for
$r$-symmetric tropical polynomials by Carlsson and Kali\v{s}nik
\cite{CK}: a finite family of invariants
cannot generate all invariants, but it can still \emph{separate} the
orbits, which is what we establish for the basic family. Second, for every
permutation action there exists a finite separating family, of
cardinality $N+N!/|G|$ and degree at most $\max(N,\binom N2)$; for
the row permutation action this gives $nr+(nr)!/n!$ invariants of
degree $O(n^2r^2)$. Against this benchmark, the basic family is
remarkably small: it consists of $\binom{n+r}{r}$ invariants of degree
at most $n$. Derksen's separating family is moreover a bi-Lipschitz
embedding \cite{Derksen}; the basic family shares this stability
(\cref{cor:bilipschitz}) while being far smaller and of lower degree.

In this paper we resolve the open problem of \cite{Kubo} in full
generality.

\begin{theoremstar}[\cref{thm:main}]
For all $n\ge1$ and $r\ge1$, the basic $r$-symmetric tropical
polynomials separate the orbits of $S_n$ on $\R^{nr}$.
\end{theoremstar}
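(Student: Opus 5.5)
The plan is to turn the basic values into a single piecewise-linear function whose shape encodes the multiset, and then read the points off that function one by one. I would begin from the combinatorial description \eqref{eq:bvalue}. For $q=(q_1,\dots,q_r)$ with $|q|=\sum_j q_j\le n$, the value $b_q(M)$ is the maximum of $\sum m_{ij}$ over all ways of choosing pairwise distinct rows, $q_j$ of them assigned to column $j$. If I add a dummy column $0$ with entries $m_{i0}=0$ and demand $q_0=n-|q|$, this becomes a balanced transportation problem. Rows have supply $1$ and columns have demands $q_0,q_1,\dots,q_r$. Its constraint matrix is totally unimodular, so the integer optimum equals the LP optimum. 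Hence $b_q(M)$ is the restriction, to the integer points of the simplex $\Delta_n=\{q\in\R_{\ge0}^r:|q|\le n\}$, of the LP value function $B(q)$. The function $B$ is concave and piecewise linear.

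Next I would pass to the conjugate. For $v\in\R^r$, with the convention $v_0=0$, set
\[
F_M(v)=\max_{q\in\Delta_n\cap\mathbb{Z}^r}\Bigl(b_q(M)-\sum_{j=1}^r q_jv_j\Bigr).
\]
Dropping the demand constraints, this is the transportation problem in which each row independently picks its best column after subtracting prices. That gives
\[
F_M(v)=\sum_{i=1}^n\varphi(v-m_i),\qquad \varphi(w)=\max\bigl(0,-w_1,\dots,-w_r\bigr),
\]
where $m_i\in\R^r$ is the $i$th row. Thus the finitely many basic values determine $F_M$ on all of $\R^r$. Note that $\varphi$ is a tropical linear form and $F_M$ is a sum of translates of it, one per point of the multiset. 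So it suffices to show that $F_M$ determines the multiset $\{m_1,\dots,m_n\}$. Conversely, concavity and LP duality give $B$ back from $F_M$ as its concave conjugate, which makes the correspondence transparent.

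To recover the multiset I would use the tropical hypersurface of $F_M$ with its weights. The corner locus of $\varphi$ is the standard tropical hyperplane, a fan with apex $0$ whose top-dimensional cones all carry weight $1$. The corner locus of $F_M$ is then the union of the translates $m_i+V(\varphi)$, with weights adding. Since all weights are positive, nothing cancels, so this weighted polyhedral complex is intrinsic to $F_M$. The remaining task is to recover the apexes with multiplicity, which I would do by induction on $n$ via peeling. Consider the face of the complex obtained by following the direction $-e_1$. Concretely, $\partial F/\partial v_1$ records how many rows choose column $1$, so its jump locus is a union of translates of the cones of $V(\varphi)$ separating region $0$ from region $1$. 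Each such cone has the form $\{m_i+w : w_1=0\le w_k,\ k\neq 1\}$. Take the point $p$ of this locus that is lexicographically extreme, and show that $p$ must be an apex $m_i$: a point that is not an apex lies in the relative interior of a cone, and moving along the lineality inside that cone would keep it in the locus, contradicting extremality. Then replace $F_M$ by $F_M-\varphi(\cdot-p)$, which equals $F_{M'}$ for the multiset with one copy of $p$ removed. The weights at $p$ also give the multiplicity of $p$ directly.

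The main obstacle is this last extraction step when $r\ge3$, since the earlier special cases in \cite{Kubo} needed a column, or a difference of two columns, with pairwise distinct entries. I expect the positivity of tropical weights to remove that genericity assumption, because coinciding or aligned translates only add weight and never cancel. The delicate point is the extremality argument, which must hold in degenerate configurations where many hyperplane translates share cones. If the geometric argument becomes messy, the fallback is purely algebraic. The functions $\varphi(\cdot-p)$ for distinct $p$ have distinct corner-locus vertices, since $p$ is the unique vertex of the fan $p+V(\varphi)$. A nonnegative integer combination of such translates is then determined by its set of vertices together with the local weights at them, which can be computed from $F_M$ near each vertex.
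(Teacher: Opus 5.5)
Your first half matches the paper's \cref{lem:duality}. Your $F_M(v)=\sum_i\varphi(v-m_i)$ is exactly the paper's $G_M$, and your ``each row independently picks its best column'' argument is a valid proof: maximizing over $q$ and over disjoint $S_1,\dots,S_r$ of those sizes is the same as maximizing over all partial assignments, and that decomposes row by row. The total-unimodularity and concave value-function remarks are not needed.

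The second half, recovering the multiset from the weighted corner locus by peeling, is a genuinely different route from the paper's. As written, however, the extraction step is wrong. The jump locus of $\partial F/\partial v_1$ is \emph{not} only the translates of $C_{01}=\{w:\ w_1=0\le w_k,\ k\ne1\}$. A row also stops choosing column $1$ when it switches to a column $k\ge2$. So the jump locus also contains the translates $m_i+C_{1k}$, where $C_{1k}=\{w:\ w_1=w_k\le0,\ w_l\ge w_1\ \forall l\}$, across which $\partial_1\varphi$ jumps from $-1$ to $0$. These cones are unbounded in the direction $-(e_1+e_k)$, so for $r\ge2$ the set you describe has no lexicographic minimum. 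It also has no lexicographic maximum, since $C_{01}$ is unbounded in the directions $+e_k$. The extremality step therefore has nothing to act on. Your supporting argument is also misdescribed: the cones $m_i+C_{01}$ are pointed, so there is no lineality to move along.

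The repair is short, and there are two options.
\begin{itemize}
\item \emph{Restrict to the $e_1$-normal facets.} Take $L_1$ to be the closure of the part of the corner locus whose normal is $e_1$. It is intrinsic to $F_M$ because all kinks are convex and facets with different normals meet only in codimension $2$. Then $L_1$ is exactly $\bigcup_i(m_i+C_{01})$. Its lexicographic \emph{minimum} is $\min_{\mathrm{lex}}m_i$: a non-apex point of $m_i+C_{01}$ has $w_k>0$ for some $k\ne1$, and decreasing $w_k$ stays in the cone and lowers the point lexicographically. Now peel one copy and induct on $n$.
\item \emph{Differentiate along $\mathbf 1$ instead of $e_1$.} Since $\partial_{\mathbf 1}\varphi$ is $0$ on region $0$ and $-1$ on every other region, the right derivative of $F_M$ along $\mathbf 1$ at $v$ equals $\#\{i:m_i\le v\}-n$. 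This is precisely the paper's \cref{lem:counting}. Its jump locus is $\bigcup_i\partial(m_i+\R^r_{\ge0})$, with no cancellation because orthants are up-sets.
\end{itemize}

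The paper then avoids peeling altogether and reads off multiplicities by the $2^r$-term inclusion--exclusion of \cref{lem:cdf}, which is what makes \cref{alg:reconstruct} explicit. Your repaired route gives a conceptual statement: the basic values encode the tropical divisor $\sum_i(m_i+V(\varphi))$, and the points are its apexes. The price is the machinery of weighted corner loci plus a genericity argument, whereas the paper's proof is fully elementary.

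Your fallback (``vertices plus local weights'') is not a proof as it stands. The corner locus has vertices that are not apexes, e.g.\ transversal crossings of two tropical lines when $r=2$. An apex can also lie on cones of other translates. Isolating apex multiplicities from local data therefore amounts to redoing the inclusion--exclusion.
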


The resulting coordinate map, which sends an $n\times r$ matrix $M$ to
the vector of its basic values
$(b_c(M))_{c\in\N^r,\,|c|\le n}$, is moreover \emph{stable}: it is
Lipschitz with respect to the optimal-matching
metric on $\R^{nr}/S_n$ (\cref{prop:stability}), and in fact a
bi-Lipschitz embedding for all $n$ and $r$
(\cref{cor:bilipschitz}), the latter via the bi-Lipschitz theory of
max filter banks \cite{BT,CIMP}, with an explicit dimension-free
constant when $r=1$. Thus the $\binom{n+r}{r}$ basic
values form not merely a complete set of coordinates but a stable one,
as is desirable for coordinates used as features of point clouds and
barcodes. Separation here is, furthermore, effective: the orbit is not merely
determined by the basic values but can be computed from them, by the
explicit reconstruction algorithm of \cref{alg:reconstruct}.

The proof occupies \cref{sec:main} and is elementary. The basic
value $b_c(M)$ is the optimal value of a transportation problem that
assigns rows of $M$ to columns without repetition. Guided by linear
programming duality (though no duality theory is invoked), we introduce
the transform
\[
G_M(\lambda)=\sum_{i=1}^{n}\max\bigl(0,\;
\max_{1\le\alpha\le r}(m_{i\alpha}-\lambda_\alpha)\bigr),
\qquad\lambda\in\R^r,
\]
where $m_{i\alpha}$ is the $(i,\alpha)$-entry of $M$, and show that the basic
values determine $G_M$ (\cref{lem:duality}), that $G_M$ determines
the counting function $\lambda\mapsto\#\{i:m_i\le\lambda\}$ of the rows
$m_1,\dots,m_n\in\R^r$ of $M$
(\cref{lem:counting}), and that the counting function determines
the multiset of rows (\cref{lem:cdf}). The argument is
constructive and yields an explicit procedure that reconstructs the
orbit from the $\binom{n+r}{r}$ basic values
(\cref{rem:algorithm}).

\Cref{sec:pairwise} analyses how much of the basic family is
really needed. The values supported on two columns
$\{\alpha,\beta\}$ carry exactly the orbit of the column pair
$(x^{(\alpha)},x^{(\beta)})$, by the case $r=2$ of our main theorem
(which itself recovers the $r=2$ result of \cite{Kubo}). We
prove that these pairwise values separate all orbits if and only if
$n\le3$ (\cref{prop:n3}); the obstruction at $n=4$ is a standard
non-uniqueness configuration of discrete tomography: two distinct sets
with the same X-rays in the three coordinate directions \cite{GG}. We further
show that the top-degree basic values encode the orbit of the
difference matrix together with a single column sum
(\cref{prop:topdegree}), and we exhibit an
eight-row configuration for which the pairwise and top-degree values
all agree while the orbits differ --- so invariants supported on at
least three columns and of degree less than $n$ are necessary
(\cref{ex:switching}). \Cref{sec:discussion} concludes
with remarks and open questions.

\section{Preliminaries}\label{sec:prelim}

Throughout, $\N=\{0,1,2,\dots\}$ denotes the nonnegative integers,
$e_1,\dots,e_r$ the standard basis of $\R^r$, and the matrix dimensions
$n$ and $r$ are positive integers.

In the tropical semiring $(\R\cup\{-\infty\},\oplus,\odot)$, known as
max-plus algebra (see e.g.\ \cite{Butkovic,IMS}), addition and
multiplication are defined by $a\oplus b=\max(a,b)$ and
$a\odot b=a+b$, with additive identity $-\infty$ and
multiplicative identity $0$. \emph{Tropical polynomials} in $N$ variables are the
functions $\R^N\to\R$ obtained from tropical monomial expressions
$a\odot x_1^{i_1}\odot\cdots\odot x_N^{i_N}$ ($a\in\R$,
$i_1,\dots,i_N\in\N$) by finitely many applications of $\oplus$; two expressions
defining the same function are identified \cite{CK}. The
\emph{degree} of a tropical polynomial is the largest total degree
$i_1+\dots+i_N$ of a monomial needed to represent it.

Throughout, $M$ denotes an $n\times r$ matrix of
variables (or of
reals) with columns $x^{(1)},\dots,x^{(r)}\in\R^n$ and rows
$m_1,\dots,m_n\in\R^r$, with entries $m_{i\alpha}$. The group
$S_n$ acts on $\R^{nr}$ by permuting the rows; we write
$[M]\in\R^{nr}/S_n$ for the orbit of $M$ and freely identify $[M]$
with the \emph{multiset} $\{m_1,\dots,m_n\}$ of rows, viewed as a
multiset of $n$ points of $\R^r$. A tropical polynomial in the entries
of $M$ is \emph{$r$-symmetric} if it is constant on the orbits.

The basic $r$-symmetric tropical polynomials were defined in
\cite[Definition~3]{Kubo}; for our purposes only
their values are needed. For $c=(c_1,\dots,c_r)\in\N^r$ with
$|c|:=c_1+\dots+c_r\le n$, the basic polynomial
$b_c=b_{c_1,\dots,c_r}$ has the value
\begin{equation}\label{eq:bvalue}
b_c(M)=\max\Bigl\{\,\sum_{\alpha=1}^{r}\sum_{i\in S_\alpha}
m_{i\alpha}\;:\;S_1,\dots,S_r\subseteq\{1,\dots,n\}
\text{ pairwise disjoint},\ |S_\alpha|=c_\alpha\Bigr\},
\end{equation}
and $b_0(M)=0$. Thus $b_c$ is an $r$-symmetric tropical
polynomial of degree $|c|$, and the family
$\{b_c:c\in\N^r,\ |c|\le n\}$ has $\binom{n+r}{r}$ members. We call the
numbers $b_c(M)$, for $c\in\N^r$ with $|c|\le n$, the \emph{basic
values} of $M$; being $r$-symmetric, they depend only on the orbit
$[M]$. We write
$\supp(c)=\{\alpha:c_\alpha\neq0\}$ for the support of $c$. For $r=1$
the basic polynomials are the elementary symmetric tropical
polynomials $b_k(x)=\max_{|S|=k}\sum_{i\in S}x_i$ \cite{CK,Matsui}.

We say that a family $\mathcal F$ of $r$-symmetric tropical
polynomials \emph{separates orbits} if for all
$M,\bar M\in\R^{nr}$, the equalities $f(M)=f(\bar M)$ for all
$f\in\mathcal F$ imply $[M]=[\bar M]$.

\section{The main theorem}\label{sec:main}

We begin by describing the optimization problem underlying the
value \eqref{eq:bvalue}. Fix $c\in\N^r$ with $|c|\le n$, and regard the
$n$ rows of $M$ as agents and the $r$ columns as tasks: assigning row
$i$ to column $\alpha$ yields the profit $m_{i\alpha}$. We look for an
assignment in which each row is used at most once and column $\alpha$
receives exactly $c_\alpha$ rows, so as to maximize the total profit.
Writing $S_\alpha\subseteq\{1,\dots,n\}$ for the set of rows assigned
to column $\alpha$, this is the assignment-type transportation problem
\[
\text{maximize}\quad
\sum_{\alpha=1}^{r}\sum_{i\in S_\alpha}m_{i\alpha}
\qquad\text{subject to}\qquad
\begin{aligned}
&|S_\alpha|=c_\alpha\quad(\alpha\in\{1,\dots,r\}),\\
&S_1,\dots,S_r\ \text{pairwise disjoint},
\end{aligned}
\]
and its optimal value is, by definition \eqref{eq:bvalue}, exactly the
basic value $b_c(M)$. Guided by linear programming duality for this
problem --- though the proofs below are elementary and invoke no
duality theory --- we attach to a price vector $\lambda\in\R^r$ the
quantities
\[
u_i(\lambda)\;=\;\max\bigl(0,\ \max_{1\le\alpha\le r}
(m_{i\alpha}-\lambda_\alpha)\bigr)\qquad(i\in\{1,\dots,n\}),
\qquad
G_M(\lambda)\;=\;\sum_{i=1}^{n}u_i(\lambda).
\]
Here $\lambda_\alpha$ acts as a price charged for occupying column
$\alpha$: the term $u_i(\lambda)$ is the best net profit row $i$ can earn from a
column, or $0$ if every column would give negative net profit, in which
case row $i$ is left unassigned; and $G_M(\lambda)$ sums these over all
rows.\footnote{%
Explicitly, encode an assignment by variables $z_{i\alpha}\in\{0,1\}$
($z_{i\alpha}=1$ when row $i$ is assigned to column $\alpha$); then
\eqref{eq:bvalue} is the linear program $\max\sum_{i,\alpha}
m_{i\alpha}z_{i\alpha}$ subject to $\sum_\alpha z_{i\alpha}\le1$ (one
constraint per row $i$), $\sum_i z_{i\alpha}=c_\alpha$ (one per column
$\alpha$), and $z_{i\alpha}\ge0$; its constraint matrix is totally
unimodular, so the optimum is integral and equals $b_c(M)$. The dual
assigns a multiplier $u_i\ge0$ to each row constraint and
$\lambda_\alpha\in\R$ to each column constraint, giving
$\min\bigl\{\sum_i u_i+\langle c,\lambda\rangle:\
u_i+\lambda_\alpha\ge m_{i\alpha}\ (\forall i,\alpha),\ u_i\ge0\bigr\}$.
Minimizing over $u$ for fixed $\lambda$ forces $u_i=u_i(\lambda)$ and
reduces the dual objective to $G_M(\lambda)+\langle c,\lambda\rangle$;
strong duality then reads
$b_c(M)=\min_\lambda\bigl(G_M(\lambda)+\langle c,\lambda\rangle\bigr)$,
of which \cref{lem:duality} is the elementary, duality-free
reformulation.}

\begin{lemma}\label{lem:duality}
For every $\lambda\in\R^r$,
\[
G_M(\lambda)\;=\;\max\bigl\{\,b_c(M)-\langle c,\lambda\rangle\;:\;
c\in\N^r,\ |c|\le n\,\bigr\}.
\]
In particular, the values of the basic
polynomials at $M$ determine the function $G_M:\R^r\to\R$.
\end{lemma}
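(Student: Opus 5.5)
We prove the identity by two inequalities, working directly with the combinatorial description \eqref{eq:bvalue}. Throughout, fix $\lambda\in\R^r$ and write $u_i=u_i(\lambda)$.

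\emph{Upper bound ($\ge$).} Fix $c$ with $|c|\le n$, and let $S_1,\dots,S_r$ be pairwise disjoint sets attaining $b_c(M)$. Then
\[
b_c(M)-\langle c,\lambda\rangle=\sum_{\alpha=1}^r\sum_{i\in S_\alpha}(m_{i\alpha}-\lambda_\alpha).
\]
For each $i\in S_\alpha$ we have $m_{i\alpha}-\lambda_\alpha\le u_i$. Since the $S_\alpha$ are disjoint, the right-hand side is therefore at most $\sum_{i\in\bigcup S_\alpha}u_i$. Because every $u_i\ge0$, this is in turn at most $\sum_{i=1}^n u_i=G_M(\lambda)$. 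This direction is the weak-duality half and is routine.

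\emph{Attainment ($\le$).} We build a $c$ whose term equals $G_M(\lambda)$. For each row $i$ with $u_i>0$, pick an index $\alpha(i)$ with $m_{i\alpha(i)}-\lambda_{\alpha(i)}=u_i$; such an index exists because $u_i>0$ forces the inner maximum to be attained and positive. Leave every row with $u_i=0$ unassigned. Set $S_\alpha=\{i:u_i>0,\ \alpha(i)=\alpha\}$ and $c_\alpha=|S_\alpha|$. The sets $S_\alpha$ are pairwise disjoint, and $|c|\le n$. Then
\[
b_c(M)-\langle c,\lambda\rangle\ \ge\ \sum_\alpha\sum_{i\in S_\alpha}(m_{i\alpha}-\lambda_\alpha)=\sum_{i:u_i>0}u_i=G_M(\lambda).
\]
Here the inequality holds because $b_c(M)$ is a maximum over all admissible choices of disjoint sets, including this one. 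The case where all $u_i=0$ is covered by $c=0$, using $b_0(M)=0$.

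\emph{Conclusion.} Combining the two inequalities gives equality, and the maximum is attained. The "in particular" clause is then immediate: the right-hand side is a finite maximum involving only the basic values $b_c(M)$ and $\lambda$, so the basic values determine $G_M$ pointwise.

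The only delicate step is the attainment. There one must choose $c$ depending on $\lambda$, and must treat rows with nonpositive net profit correctly by leaving them unassigned, which is exactly what the constraint $|c|\le n$ (rather than $|c|=n$) and $b_0=0$ permit. I expect no real obstacle beyond this bookkeeping.
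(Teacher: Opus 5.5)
Your proof is correct and follows the paper's argument essentially verbatim: the $\ge$ direction bounds each term $m_{i\alpha}-\lambda_\alpha$ by $u_i(\lambda)$ and uses $u_i(\lambda)\ge0$ with disjointness, and the $\le$ direction assigns each row with $u_i(\lambda)>0$ to a maximizing column and takes $c=(|S_1|,\dots,|S_r|)$. Your separate treatment of the case where all $u_i(\lambda)=0$, via $c=0$ and $b_0(M)=0$, is a harmless addition that the paper's construction covers implicitly.
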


\begin{proof}
($\ge$) Let $c\in\N^r$ with $|c|\le n$, and let $S_1,\dots,S_r$ be
pairwise disjoint sets with $|S_\alpha|=c_\alpha$. Then
\[
\sum_{\alpha=1}^{r}\sum_{i\in S_\alpha}m_{i\alpha}
-\langle c,\lambda\rangle
=\sum_{\alpha=1}^{r}\sum_{i\in S_\alpha}(m_{i\alpha}-\lambda_\alpha)
\le\sum_{\alpha=1}^{r}\sum_{i\in S_\alpha}u_i(\lambda)
\le\sum_{i=1}^{n}u_i(\lambda)=G_M(\lambda);
\]
the first inequality holds term by term by the definition of
$u_i(\lambda)$, and the second uses $u_i(\lambda)\ge0$ together with
the disjointness of $S_1,\dots,S_r$. Maximizing over
$S_1,\dots,S_r$ gives
$b_c(M)-\langle c,\lambda\rangle\le G_M(\lambda)$.

($\le$) For each $i$ with $u_i(\lambda)>0$, let $\alpha_i$ be the
smallest index $\alpha$ with
$m_{i\alpha}-\lambda_\alpha=u_i(\lambda)$ (such an $\alpha$ exists, as
$u_i(\lambda)>0$ equals $\max_\alpha(m_{i\alpha}-\lambda_\alpha)$), and set
\[
S_\alpha=\{\,i\;:\;u_i(\lambda)>0\ \text{and}\ \alpha_i=\alpha\,\}
\qquad(\alpha\in\{1,\dots,r\}).
\]
These sets are pairwise disjoint; put $c=(|S_1|,\dots,|S_r|)$, so
$c\in\N^r$ and $|c|\le n$. Then
\[
b_c(M)-\langle c,\lambda\rangle
\;\ge\;\sum_{\alpha=1}^{r}\sum_{i\in S_\alpha}
(m_{i\alpha}-\lambda_\alpha)
\;=\sum_{i\,:\,u_i(\lambda)>0}u_i(\lambda)\;=\;G_M(\lambda).
\]
The inequality holds because the $S_\alpha$ are disjoint with
$|S_\alpha|=c_\alpha$, so
$\sum_\alpha\sum_{i\in S_\alpha}m_{i\alpha}\le b_c(M)$; the first equality uses $m_{i\alpha}-\lambda_\alpha
=u_i(\lambda)$ for $i\in S_\alpha$, by the choice of $\alpha_i$,
together with the fact that the $S_\alpha$ partition
$\{i:u_i(\lambda)>0\}$; and the last equality holds since
$u_i(\lambda)=0$ for every other $i$.
\end{proof}

\begin{lemma}\label{lem:counting}
The function $G_M$ determines the counting function
\[
N_M(\lambda)\;=\;\#\{\,i\;:\;m_{i\alpha}\le\lambda_\alpha
\ \text{for all}\ \alpha\,\}\qquad(\lambda\in\R^r).
\]
Explicitly, with $\mathbf 1=(1,\dots,1)\in\R^r$,
\[
N_M(\lambda)\;=\;n-\lim_{t\downarrow0}
\frac{G_M(\lambda)-G_M(\lambda+t\mathbf 1)}{t}.
\]
\end{lemma}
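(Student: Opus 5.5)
The plan is to compute the one-sided directional derivative of $G_M$ in direction $-\mathbf 1$ term by term. Since $G_M=\sum_i u_i$ is a finite sum, it suffices to show for each row $i$ that
\[
\lim_{t\downarrow0}\frac{u_i(\lambda)-u_i(\lambda+t\mathbf 1)}{t}
=\begin{cases}1,& m_i\not\le\lambda,\\ 0,& m_i\le\lambda,\end{cases}
\]
where $m_i\le\lambda$ means $m_{i\alpha}\le\lambda_\alpha$ for all $\alpha$. Summing over $i$ then gives $\#\{i:m_i\not\le\lambda\}=n-N_M(\lambda)$, which is the displayed formula. Because the right-hand side depends only on $G_M$, this also shows that $G_M$ determines $N_M$.

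To analyse a single row, set $v_i(\lambda)=\max_\alpha(m_{i\alpha}-\lambda_\alpha)$. The key observation is that shifting all prices equally gives $v_i(\lambda+t\mathbf 1)=v_i(\lambda)-t$ exactly, since the same $t$ is subtracted from every entry inside the maximum. Hence $u_i(\lambda+t\mathbf 1)=\max(0,v_i(\lambda)-t)$, and the question reduces to the one-variable function $t\mapsto\max(0,v-t)$ near $t=0^+$. There are two cases.

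\emph{Case $v_i(\lambda)>0$}, i.e.\ some $m_{i\alpha}>\lambda_\alpha$, so $m_i\not\le\lambda$. For $0<t<v_i(\lambda)$ we get $u_i(\lambda)-u_i(\lambda+t\mathbf 1)=t$, and the quotient equals $1$.

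\emph{Case $v_i(\lambda)\le0$}, i.e.\ $m_i\le\lambda$. For every $t>0$ both $u_i(\lambda)$ and $u_i(\lambda+t\mathbf 1)$ vanish, so the quotient is $0$.

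The boundary situation $v_i(\lambda)=0$ falls into the second case. This is exactly why the right derivative (the limit $t\downarrow0$, i.e.\ raising the prices) is the correct choice: it produces the non-strict inequality $\le$ in $N_M$.

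There is no real obstacle here. The only points requiring care are the exact translation identity for $v_i$, which makes each limit eventually constant rather than merely convergent, and the boundary case $v_i(\lambda)=0$, where the choice of one-sided limit decides between $\le$ and $<$.
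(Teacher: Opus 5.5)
Your proof is correct and takes essentially the same approach as the paper, which also rests on the translation identity $u_i(\lambda+t\mathbf 1)=\max(0,\mu_i-t)$ with $\mu_i=\max_\alpha(m_{i\alpha}-\lambda_\alpha)$. The paper treats all rows at once, choosing $0<t\le t_0=\min\{\mu_i:\mu_i>0\}$ so that the whole difference quotient equals $\#\{i:\mu_i>0\}$ exactly, whereas you compute the per-row limits and sum them.
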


\begin{proof}
Put $\mu_i=\max_{\alpha}(m_{i\alpha}-\lambda_\alpha)$, so that
$u_i(\lambda+t\mathbf 1)=\max(0,\mu_i-t)$ for all $t\in\R$, and
$\mu_i\le0$ if and only if $m_i\le\lambda$ componentwise. Let
$t_0=\min\{\mu_i:\mu_i>0\}$, with $t_0=\infty$ if no $\mu_i$ is
positive. For $0<t\le t_0$,
\[
G_M(\lambda)-G_M(\lambda+t\mathbf 1)
=\sum_{i=1}^{n}\bigl[\max(0,\mu_i)-\max(0,\mu_i-t)\bigr]
=t\cdot\#\{\,i:\mu_i>0\,\},
\]
because $t\le t_0$ means no $\mu_i$ lies in $(0,t)$, so each summand is
$0$ when $\mu_i\le0$ and $t$ when $\mu_i\ge t$, the latter occurring for
exactly the indices with $\mu_i>0$. Dividing by $t$ and letting
$t\downarrow0$ gives
$\#\{i:\mu_i>0\}=n-N_M(\lambda)$.
\end{proof}

\begin{lemma}\label{lem:cdf}
A multiset $\{m_1,\dots,m_n\}$ of points of $\R^r$ is determined by
its counting function $N_M$. Explicitly, for $v\in\R^r$ and
$S\subseteq\{1,\dots,r\}$ write
$\mathbf 1_S=\sum_{\alpha\in S}e_\alpha$; then for all sufficiently
small $\varepsilon>0$,
\[
\#\{\,i\;:\;m_i=v\,\}\;=\;
\sum_{S\subseteq\{1,\dots,r\}}(-1)^{|S|}\,
N_M(v-\varepsilon\mathbf 1_S).
\]
\end{lemma}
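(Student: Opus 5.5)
The plan is to prove the explicit inclusion–exclusion formula. It determines the multiplicity of every $v\in\R^r$, and hence the multiset. The approach is to write $N_M$ as a sum of indicator functions over the rows and then evaluate the alternating sum row by row:
\[
N_M(\lambda)=\sum_{i=1}^n \prod_{\alpha=1}^r \mathbf{1}[m_{i\alpha}\le\lambda_\alpha].
\]
Since the right-hand side of the claimed identity is linear in $N_M$, it suffices to show that for each single point $m=m_i$,
\[
\sum_{S\subseteq\{1,\dots,r\}}(-1)^{|S|}\prod_{\alpha=1}^r\mathbf 1\bigl[m_\alpha\le v_\alpha-\varepsilon[\alpha\in S]\bigr]=\mathbf 1[m=v]
\]
for all sufficiently small $\varepsilon>0$.

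The key step is to choose $\varepsilon$ uniformly over the finitely many rows: take $0<\varepsilon<\min\{v_\alpha-m_{i\alpha}: m_{i\alpha}<v_\alpha\}$, with any positive value allowed if that set is empty. With this choice, for each $i$ and $\alpha$ we have $m_{i\alpha}\le v_\alpha-\varepsilon$ if and only if $m_{i\alpha}<v_\alpha$. The product over $\alpha$ then factors the alternating sum as
\[
\prod_{\alpha=1}^r\Bigl(\mathbf 1[m_\alpha\le v_\alpha]-\mathbf 1[m_\alpha\le v_\alpha-\varepsilon]\Bigr)=\prod_{\alpha=1}^r\Bigl(\mathbf 1[m_\alpha\le v_\alpha]-\mathbf 1[m_\alpha<v_\alpha]\Bigr)=\prod_{\alpha=1}^r\mathbf 1[m_\alpha=v_\alpha].
\]
This uses the expansion $\prod_\alpha(a_\alpha-b_\alpha)=\sum_S(-1)^{|S|}\prod_{\alpha\notin S}a_\alpha\prod_{\alpha\in S}b_\alpha$. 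The final product is exactly $\mathbf 1[m=v]$.

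Summing over $i$ gives the formula. Consequently, two matrices with the same $N_M$ have the same multiplicity at every $v$, and therefore the same multiset of rows.

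The only delicate point is the uniformity of $\varepsilon$. A single $\varepsilon$ has to work for all rows and coordinates at once, and the threshold is strict ($m<v$ versus $m\le v-\varepsilon$). Finiteness of the rows handles this, so I expect no real obstacle.
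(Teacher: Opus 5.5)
Your proof is correct and takes essentially the paper's approach: the paper also uses inclusion--exclusion to see that the alternating sum counts the rows with $v_\alpha-\varepsilon<m_{i\alpha}\le v_\alpha$ for all $\alpha$, then takes $\varepsilon$ below the relevant gaps $|m_{i\alpha}-v_\alpha|$ so that this box isolates $m_i=v$. Your row-by-row factorization $\prod_\alpha\bigl(\mathbf 1[m_\alpha\le v_\alpha]-\mathbf 1[m_\alpha\le v_\alpha-\varepsilon]\bigr)$ just spells that inclusion--exclusion step out explicitly.
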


\begin{proof}
By inclusion--exclusion, the right-hand side counts the indices $i$
with $v_\alpha-\varepsilon<m_{i\alpha}\le v_\alpha$ for all $\alpha$.
Provided $\varepsilon$ is smaller than every nonzero value of
$|m_{i\alpha}-v_\alpha|$, these are exactly the indices with $m_i=v$.
\end{proof}

\begin{theorem}\label{thm:main}
Let $n\ge1$ and $r\ge1$. The basic $r$-symmetric tropical polynomials
$b_c$, ranging over $c\in\N^r$ with $|c|\le n$, separate the orbits of
$S_n$ on $\R^{nr}$. Consequently, the map
$M\mapsto(b_c(M))_{c\in\N^r,\,|c|\le n}$ induces an injection of
$\R^{nr}/S_n$ into $\R^{\binom{n+r}{r}}$.
\end{theorem}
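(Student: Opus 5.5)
The plan is to chain the three preceding lemmas. Suppose $M,\bar M\in\R^{nr}$ satisfy $b_c(M)=b_c(\bar M)$ for every $c\in\N^r$ with $|c|\le n$. We show $[M]=[\bar M]$ in three steps.

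First, \cref{lem:duality} expresses $G_M(\lambda)$ as the maximum of $b_c(M)-\langle c,\lambda\rangle$ over the finite index set $\{c\in\N^r:|c|\le n\}$. This index set depends only on $n$ and $r$. Hence $G_M(\lambda)=G_{\bar M}(\lambda)$ for every $\lambda\in\R^r$.

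Second, \cref{lem:counting} gives $N_M(\lambda)$ through the one-sided difference quotient of $G_M$ along $\mathbf 1$ together with the integer $n$. Both matrices have the same $n$, so equality of $G_M$ and $G_{\bar M}$ forces $N_M=N_{\bar M}$ on all of $\R^r$.

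Third, we apply \cref{lem:cdf}. Fix $v\in\R^r$ and choose $\varepsilon>0$ small enough for both $M$ and $\bar M$ at once, namely below every nonzero $|m_{i\alpha}-v_\alpha|$ and every nonzero $|\bar m_{i\alpha}-v_\alpha|$. The inclusion--exclusion formula then shows that $v$ has the same multiplicity among the rows of $M$ as among the rows of $\bar M$. Since $v$ is arbitrary, the two row multisets coincide, which means $[M]=[\bar M]$.

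The injectivity statement follows directly. The basic values are $r$-symmetric, so the map $M\mapsto(b_c(M))_c$ descends to $\R^{nr}/S_n$, and the argument above shows the induced map is injective. The target has $\binom{n+r}{r}$ coordinates, which is the number of $c\in\N^r$ with $|c|\le n$.

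There is essentially no obstacle, since the work was done in the lemmas. The only point needing care is the choice of $\varepsilon$ in the last step. It must be taken uniformly for the two matrices, and this is possible because only finitely many nonzero differences occur.
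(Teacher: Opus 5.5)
Your proof is correct and follows the paper's argument exactly: \cref{lem:duality} gives $G_M=G_{\bar M}$, \cref{lem:counting} then gives $N_M=N_{\bar M}$, \cref{lem:cdf} recovers the row multiset, and injectivity on the quotient follows from $r$-symmetry. Your choice of a single $\varepsilon$ that works for both matrices at once is a detail the paper leaves implicit in the phrase ``determined by its counting function,'' and you handle it correctly.
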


\begin{proof}
For separation (injectivity of the map on the quotient), suppose that
$b_c(M)=b_c(\bar M)$ for all $c$. By
\cref{lem:duality}, $G_M=G_{\bar M}$ as functions on $\R^r$. By
\cref{lem:counting}, $N_M=N_{\bar M}$. By \cref{lem:cdf},
the row multisets of $M$ and $\bar M$ coincide, that is,
$[M]=[\bar M]$. The basic polynomials are $r$-symmetric, so they are
constant on each orbit; hence the map $M\mapsto(b_c(M))_c$ descends to
$\R^{nr}/S_n$, and the separation just proved is exactly its
injectivity, giving the asserted injection.
\end{proof}

\begin{remark}\label{rem:r1}
For $r=1$ the transform reduces to
$G_M(\lambda)=\sum_i\max(0,x_i-\lambda)$, which by
\cref{lem:counting} determines the counting function
$N_M(\lambda)=\#\{i:x_i\le\lambda\}$ ($n$ times the empirical
distribution function of $x_1,\dots,x_n$) and hence the multiset
$\{x_1,\dots,x_n\}$;
\cref{thm:main} then recovers the elementary fact that the partial
sums of the order statistics determine a real $n$-tuple up to
permutation. These partial sums are exactly the elementary symmetric
tropical polynomials \cite{CK,Matsui}.
\end{remark}

\begin{remark}[Effective reconstruction]\label{rem:algorithm}
The proof is constructive. Given the $\binom{n+r}{r}$ basic
values $b_c$ of an unknown $M$, one evaluates
$G_M(\lambda)=\max_c\,(b_c-\langle c,\lambda\rangle)$ at any desired
$\lambda$, obtains $N_M$ by \cref{lem:counting}, and reads off the
rows by \cref{lem:cdf}. The entries of column $\alpha$ are themselves
read off from the basic values: the $k$th largest entry of column
$\alpha$ is $y_k=b_{ke_\alpha}-b_{(k-1)e_\alpha}$ (with $b_0=0$), since
$b_{ke_\alpha}$ is the sum of the $k$ largest entries of column
$\alpha$; so the candidate
values in each coordinate are known, and one searches only their at
most $n^r$ products. If the entries are integers, all breakpoints of the piecewise-linear
function $G_M$ (the points where its slope changes) lie at
integer points and the limits may be replaced by evaluations at
$t=\varepsilon=\tfrac12$; the orbit is then recovered exactly by
finitely many evaluations of $G_M$. For general real entries the same
holds with $t$ and $\varepsilon$ any positive number smaller than the
least gap between two distinct entries lying in a common column.
\end{remark}

The procedure of \cref{rem:algorithm} is summarized as
\cref{alg:reconstruct}, which writes $G$, $N$ for the transforms $G_M$,
$N_M$ and as stated is not optimized for speed: the
candidate set $V_1\times\dots\times V_r$ has up to $n^r$ points (each
column contributes at most $n$ distinct values), and the
inclusion--exclusion step costs $2^r$ evaluations of $N_M(\lambda)$ at
each, so the running time grows exponentially in the number of columns
$r$, even though the multiset has only $n$ points. There is, however,
considerable room for acceleration: when some column $\alpha$ has
pairwise distinct entries --- the generic case, and the situation
already exploited in \cite[Proposition~13]{Kubo} --- each of its $n$
values belongs to a unique row, and differencing $N_M(\lambda)$ in the
$\alpha$-direction between that value and the next-smaller entry of
column $\alpha$ isolates the row: the resulting difference is an
indicator function whose unique breakpoint in each remaining column is
the row's coordinate there. This recovers the orbit in time polynomial in $n$
and $r$ without enumerating the full candidate set; we do not pursue
such refinements here.

\begin{algorithm}[ht]
\caption{Reconstruct the point multiset $[M]$ from the basic values.}
\label{alg:reconstruct}
\begin{algorithmic}[1]
\Require the $\binom{n+r}{r}$ basic values $b_c$ \ ($c\in\N^r$, $|c|\le n$)
\Ensure the multiset $[M]=\{m_1,\dots,m_n\}\subset\R^r$
\State $G(\lambda)\gets\max_{|c|\le n}\bigl(b_c-\langle c,\lambda\rangle\bigr)$
       \Comment{\cref{lem:duality}}
\For{$\alpha=1$ \textbf{to} $r$}
  \State $V_\alpha\gets\{\,b_{ke_\alpha}-b_{(k-1)e_\alpha}:k=1,\dots,n\,\}$
         \Comment{entries of column $\alpha$; $b_0=0$}
\EndFor
\State choose $\varepsilon>0$ below the least gap between two distinct
       entries within a single $V_\alpha$ \Comment{$\varepsilon=\tfrac12$ for integer data}
\State $N(\lambda)\gets n-\dfrac{G(\lambda)-G(\lambda+\varepsilon\mathbf 1)}{\varepsilon}$
       \Comment{\cref{lem:counting}; limit attained at $t=\varepsilon$}
\State $[M]\gets\emptyset$
\ForAll{candidates $v\in V_1\times\dots\times V_r$}
       \Comment{at most $n^r$ points}
  \State $m\gets\sum_{S\subseteq\{1,\dots,r\}}(-1)^{|S|}\,N(v-\varepsilon\mathbf 1_S)$
         \Comment{\cref{lem:cdf}}
  \If{$m>0$}
    \State add $v$ to $[M]$ with multiplicity $m$
  \EndIf
\EndFor
\State \Return $[M]$
\end{algorithmic}
\end{algorithm}

Beyond mere injectivity, the coordinate map is stable. Equip $\R^{nr}$
with the Frobenius norm $\|M\|=(\sum_{i,\alpha}m_{i\alpha}^2)^{1/2}$ and
the quotient $\R^{nr}/S_n$ with the induced optimal-matching metric
\[
d([M],[\bar M])=\min_{\pi\in S_n}\|M-\pi\cdot\bar M\|,
\]
where $\pi$ permutes rows; this is the distance obtained by optimally
matching the two underlying point multisets. Write
$\Phi(M)=(b_c(M))_{c\in\N^r,\,|c|\le n}\in\R^{\binom{n+r}{r}}$, with the Euclidean
norm $\|\cdot\|_2$ on the target.

\begin{proposition}[Stability]\label{prop:stability}
The coordinate map $\Phi$ is Lipschitz: for all $M,\bar M\in\R^{nr}$,
\[
\|\Phi(M)-\Phi(\bar M)\|_2\;\le\;
\Bigl(\sum_{c\in\N^r,\,|c|\le n}|c|\Bigr)^{1/2}\, d([M],[\bar M]).
\]
In particular $\Phi$ descends to a Lipschitz injection of
$\R^{nr}/S_n$. For $r=1$ it is moreover a bi-Lipschitz embedding:
$d([x],[\bar x])\le 2\,\|\Phi(x)-\Phi(\bar x)\|_2$.
\end{proposition}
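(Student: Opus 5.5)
The forward bound will come from a per-coordinate Lipschitz estimate. Since $\Phi$ is constant on orbits, I may replace $\bar M$ by $\pi\cdot\bar M$ for an optimal $\pi$ and assume $d([M],[\bar M])=\|M-\bar M\|$. Fix $c$ with $|c|\le n$. Each admissible choice of disjoint sets $S_1,\dots,S_r$ gives a linear functional $L_S(M)=\sum_\alpha\sum_{i\in S_\alpha}m_{i\alpha}$. Its coefficient vector has exactly $|c|$ entries equal to $1$ and the rest $0$, so it has Euclidean norm $\sqrt{|c|}$. By Cauchy--Schwarz, $|L_S(M)-L_S(\bar M)|\le\sqrt{|c|}\,\|M-\bar M\|$. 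A maximum of finitely many $K$-Lipschitz functions is $K$-Lipschitz, so $|b_c(M)-b_c(\bar M)|\le\sqrt{|c|}\,\|M-\bar M\|$. Squaring and summing over $c$ gives the stated constant $(\sum_c|c|)^{1/2}$. Injectivity on the quotient is \cref{thm:main}, so $\Phi$ descends to a Lipschitz injection.

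For $r=1$, $c=k\in\{0,\dots,n\}$, and \cref{rem:algorithm} gives $b_k(x)=\sum_{j\le k}x_{(j)}$, where $x_{(1)}\ge\dots\ge x_{(n)}$ are the order statistics. Hence $x_{(k)}=b_k-b_{k-1}$. The optimal matching between two real multisets pairs sorted with sorted; this is the standard rearrangement fact, shown by swapping crossed pairs. So I need only the inequality $\|x_{\mathrm{sort}}-\bar x_{\mathrm{sort}}\|\le\|x-\bar x\|$, which I would verify by this exchange argument. Consequently
\[
d([x],[\bar x])=\Bigl(\sum_{k=1}^n\bigl(\delta_k-\delta_{k-1}\bigr)^2\Bigr)^{1/2},\qquad \delta_k=b_k(x)-b_k(\bar x),\ \delta_0=0.
\]
The map $\delta\mapsto(\delta_k-\delta_{k-1})_k$ is a difference operator. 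Using $(a-b)^2\le2a^2+2b^2$, I get $\sum_k(\delta_k-\delta_{k-1})^2\le 2\sum_k\delta_k^2+2\sum_k\delta_{k-1}^2\le4\|\delta\|_2^2$. This yields $d\le2\|\Phi(x)-\Phi(\bar x)\|_2$.

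The main point needing care is the identification of $d$ with the sorted-order distance; once that holds, the rest is the triangle/Cauchy--Schwarz bound above. I do not expect a real obstacle.
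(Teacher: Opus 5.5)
Your proposal is correct and follows the paper's proof essentially step for step. The forward bound comes from the $\sqrt{|c|}$-Lipschitz linear functionals with $0/1$ coefficients (the paper phrases this through subgradients of the convex $b_c$). The $r=1$ lower bound comes from inverting the partial-sum matrix: its inverse is the difference operator $I-S$ of norm at most $2$, which is exactly what your estimate $(a-b)^2\le 2a^2+2b^2$ shows.

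One small simplification: the chain of inequalities only uses $d([x],[\bar x])\le\|x_{\mathrm{sort}}-\bar x_{\mathrm{sort}}\|_2$. That holds immediately, because sorting both vectors is one particular permutation in the minimum defining $d$. So the exchange argument for the reverse inequality, which you flagged as the delicate point, is not needed.
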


\begin{proof}
Each $b_c$ is a maximum of linear functionals
$M\mapsto\sum_{\alpha}\sum_{i\in S_\alpha}m_{i\alpha}$ whose gradients
are $0/1$ matrices with exactly $|c|$ ones; hence $b_c$ is convex with
subgradients of Frobenius norm at most $\sqrt{|c|}$, so
$|b_c(M)-b_c(\bar M)|\le\sqrt{|c|}\,\|M-\bar M\|$. As $b_c$ is
$S_n$-invariant, $\bar M$ may be replaced by $\pi\cdot\bar M$ for the
optimal $\pi$, giving $|b_c(M)-b_c(\bar M)|\le\sqrt{|c|}\,d([M],[\bar
M])$. Squaring and summing over $c$ yields the stated bound, with
Lipschitz constant $\bigl(\sum_{c\in\N^r,\,|c|\le n}|c|\bigr)^{1/2}
\le\sqrt{n\tbinom{n+r}{r}}$ since $|c|\le n$.

For $r=1$, write $x^{\downarrow}$ for the decreasing rearrangement of
$x$; then $b_k(x)=\sum_{i=1}^k x^{\downarrow}_i$ is the sum of the $k$
largest entries, so $(b_1,\dots,b_n)=Px^{\downarrow}$, where $P$ is the lower-triangular
all-ones (partial-sum) matrix. The quotient metric $d$ on $\R^n/S_n$
equals $\|x^{\downarrow}-\bar x^{\downarrow}\|_2$, and the difference
operator $P^{-1}=I-S$ ($S$ the down-shift) satisfies
$\|P^{-1}\|_2=\|I-S\|_2\le\|I\|_2+\|S\|_2=2$; hence
$d([x],[\bar x])=\|x^{\downarrow}-\bar x^{\downarrow}\|_2
\le 2\,\|P(x^{\downarrow}-\bar x^{\downarrow})\|_2
=2\,\|\Phi(x)-\Phi(\bar x)\|_2$.
\end{proof}

The lower bound extends to all $r$ once one observes that $\Phi$ is a
\emph{max filter bank} in the sense of \cite{BT,CIMP}. Indeed, for
$c\ne0$ let $Z_c$ be an $n\times r$ matrix of $0$s and $1$s with
$c_\alpha$ ones in column $\alpha$, placed in pairwise disjoint rows
(possible as $|c|\le n$).
As $\sigma$ ranges over $S_n$, the support of $\sigma\cdot Z_c$ runs
through exactly the disjoint tuples $(S_1,\dots,S_r)$ with
$|S_\alpha|=c_\alpha$, so
\begin{equation}\label{eq:maxfilter}
b_c(M)=\max_{\sigma\in S_n}\langle M,\ \sigma\cdot Z_c\rangle,
\end{equation}
a max filter with nonzero window $Z_c$ for the finite group $S_n$
acting on $\R^{nr}$ by orthogonal transformations (the row permutations
act by permutation matrices, which are orthogonal).

\begin{corollary}\label{cor:bilipschitz}
For all $n\ge1$ and $r\ge1$, the coordinate map $\Phi$ is a
bi-Lipschitz embedding of $\R^{nr}/S_n$: there is a constant
$c_{n,r}>0$ with
\[
c_{n,r}\,d([M],[\bar M])\;\le\;\|\Phi(M)-\Phi(\bar M)\|_2\;\le\;
\Bigl(\sum_{c\in\N^r,\,|c|\le n}|c|\Bigr)^{1/2} d([M],[\bar M])
\]
for all $M,\bar M\in\R^{nr}$.
\end{corollary}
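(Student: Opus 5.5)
The upper bound is exactly \cref{prop:stability}, so only the lower bound needs proof. For it I will invoke the bi-Lipschitz theory of max filter banks \cite{BT,CIMP}. The relevant result says: for a finite group $G$ acting orthogonally on a Euclidean space $V$, any max filter bank $x\mapsto(\max_{g\in G}\langle x,g\cdot z_j\rangle)_j$ that is injective on $V/G$ is automatically bi-Lipschitz with respect to the quotient metric $d([x],[y])=\min_g\|x-gy\|$. By \eqref{eq:maxfilter}, the nonconstant coordinates of $\Phi$ form precisely such a bank. Here $G=S_n$ acts on $V=\R^{nr}$ by row permutations, which are orthogonal, and the windows are $Z_c$ for $c\neq0$. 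The remaining coordinate $b_0=0$ is constant and does not affect $\|\Phi(M)-\Phi(\bar M)\|_2$. By \cref{thm:main} this bank is injective on $\R^{nr}/S_n$. The cited theorem then yields a constant $c_{n,r}>0$ with $c_{n,r}\,d([M],[\bar M])\le\|\Phi(M)-\Phi(\bar M)\|_2$.

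To make the argument self-contained, I would check that the hypotheses match and sketch why injectivity upgrades to a lower Lipschitz bound. First, each max filter is positively homogeneous and the group is finite. Second, the bank is piecewise linear: $\R^{nr}\times\R^{nr}$ decomposes into finitely many polyhedral cones on which, for every $c$, the maximizing permutations for $M$ and for $\bar M$ are fixed. On such a cone, $\Phi(M)-\Phi(\bar M)$ is the linear map $(M,\bar M)\mapsto(\langle M,\sigma_c Z_c\rangle-\langle\bar M,\tau_c Z_c\rangle)_c$.

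Next one argues by contradiction. Suppose the ratio $\|\Phi(M)-\Phi(\bar M)\|/d([M],[\bar M])$ could be made arbitrarily small. After normalizing by homogeneity and passing to a subsequence, all pairs lie in a single cone. This produces a limiting pair, or a limiting direction at a pair, along which $\Phi(M)=\Phi(\bar M)$ holds while $d>0$. That contradicts injectivity.

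The main obstacle is that the infimum of the ratio may be approached with $d\to0$, that is, $M$ and $\bar M$ converge to the same orbit. The naive compactness argument fails there, and one needs a local argument near points with nontrivial stabilizer. This is exactly the content of the max filtering theorem (the lower Lipschitz bound for injective max filter banks with finite groups, \cite{BT}). I would rely on it rather than reprove it, noting that its hypotheses are exactly those of \eqref{eq:maxfilter} together with \cref{thm:main}. The constant $c_{n,r}$ is then nonexplicit except in the case $r=1$, where \cref{prop:stability} already gives $c_{n,1}=\tfrac12$.
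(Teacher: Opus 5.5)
Your proposal is correct and follows the paper's proof: the upper bound is \cref{prop:stability}, and the lower bound comes from applying the max-filtering result of \cite{BT} (Corollary~1.5 there) to the bank $(b_c)_{0<|c|\le n}$ of \eqref{eq:maxfilter}, which has nonzero windows, is injective by \cref{thm:main}, and is obtained after discarding the constant $b_0$. The only detail the paper adds is setting aside the trivial case $n=r=1$ so that $nr\ge2$ for the cited result; your remark that \cref{prop:stability} gives $c_{n,1}=\tfrac12$ already covers that case.
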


\begin{proof}
The upper bound is \cref{prop:stability}. For the lower bound we may
assume $nr\ge2$ (the case $n=r=1$ is immediate). By
\eqref{eq:maxfilter} the family $(b_c)_{c\in\N^r,\,0<|c|\le n}$ is a max filter
bank with nonzero windows for the finite group $S_n$ acting
orthogonally on $\R^{nr}$, and by \cref{thm:main} it is injective on
$\R^{nr}/S_n$; dropping the constant value $b_0\equiv0$ affects neither
the injectivity nor $\|\Phi(M)-\Phi(\bar M)\|_2$. Hence
\cite[Corollary~1.5]{BT} applies and yields a constant $c_{n,r}>0$ as
claimed.
\end{proof}

For $r=1$ the constant of \cref{prop:stability} is explicit and
dimension-free ($c_{n,1}=\tfrac12$); for $r\ge2$ the lower bound of
\cite{BT} rests on a compactness argument and does not exhibit
$c_{n,r}$, so the growth of the optimal distortion remains open
(\cref{sec:discussion}).

\section{How much of the basic family is needed?}\label{sec:pairwise}

\Cref{thm:main} uses the full basic family. In this section we
show that certain proper subfamilies do not suffice in general: the
pairwise-supported values separate exactly when $n\le3$
(\cref{prop:n3}), and even together with the top-degree values
they fail for $n\ge8$ (\cref{ex:switching}); on the other hand,
each of these subfamilies carries a geometric meaning
(projections of the row multiset), which we make precise.

For a multiset $A$ of points of a product set
$X_1\times\dots\times X_r$ --- the orbit space $\R^{nr}/S_n$ being the
special case $X_1=\dots=X_r=\R$ with $|A|=n$ --- and a subset
$S\subseteq\{1,\dots,r\}$, we write $\proj_S(A)$ for the projection of
$A$ to the coordinates in $S$, \emph{counted with multiplicity}. The
values supported on two columns are governed by the case $r=2$ of the
main theorem.

\begin{corollary}\label{cor:pairwise}
Let $r\ge2$ and let $\alpha,\beta\in\{1,\dots,r\}$ be distinct. The basic values $b_c(M)$ with
$\supp(c)\subseteq\{\alpha,\beta\}$ determine the projection
$\proj_{\{\alpha,\beta\}}([M])$.
\end{corollary}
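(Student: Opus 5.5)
The plan is to reduce to the case $r=2$ of \cref{thm:main} applied to the $n\times 2$ submatrix $M'=(x^{(\alpha)}\;x^{(\beta)})$ formed by the two chosen columns. The rows of $M'$ are exactly the projections $(m_{i\alpha},m_{i\beta})$ of the rows of $M$, so the row multiset $[M']$ is precisely $\proj_{\{\alpha,\beta\}}([M])$, counted with multiplicity.

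The first step is to check that the basic values of $M'$ are among the given values of $M$. For $c'=(c'_1,c'_2)\in\N^2$ with $|c'|\le n$, let $c\in\N^r$ have $c_\alpha=c'_1$, $c_\beta=c'_2$ and all other entries zero. Then $\supp(c)\subseteq\{\alpha,\beta\}$ and $|c|=|c'|\le n$. In the formula \eqref{eq:bvalue} for $b_c(M)$, the sets $S_\gamma$ with $\gamma\notin\{\alpha,\beta\}$ are forced to be empty. What remains is a maximum over disjoint $S_\alpha,S_\beta$ of sizes $c'_1,c'_2$ of $\sum_{i\in S_\alpha}m_{i\alpha}+\sum_{i\in S_\beta}m_{i\beta}$, which is exactly $b_{c'}(M')$. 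So every basic value of $M'$ equals some $b_c(M)$ with $\supp(c)\subseteq\{\alpha,\beta\}$.

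The second step is to conclude. If $M$ and $\bar M$ agree on all $b_c$ with $\supp(c)\subseteq\{\alpha,\beta\}$, then $M'$ and $\bar M'$ have the same basic values. \Cref{thm:main} with $r=2$ then gives $[M']=[\bar M']$, that is, equal projections.

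For an effective version, \cref{alg:reconstruct} can be run on these values to recover the projection explicitly. No real obstacle is expected. The only point needing care is the bookkeeping that empty $S_\gamma$ contribute nothing and that the multiplicity of the projection matches the row multiset of $M'$.
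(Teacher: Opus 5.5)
Your proposal is correct and follows essentially the same route as the paper: identify the values $b_c(M)$ with $\supp(c)\subseteq\{\alpha,\beta\}$ as the basic $2$-symmetric values of the $n\times2$ matrix $(x^{(\alpha)},x^{(\beta)})$, then apply \cref{thm:main} with $r=2$. You additionally spell out why the sets $S_\gamma$ with $\gamma\notin\{\alpha,\beta\}$ are empty, which makes that identification explicit.
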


\begin{proof}
The values $b_c$ with $\supp(c)\subseteq\{\alpha,\beta\}$ are exactly
the basic $2$-symmetric values of the $n\times2$ matrix
$(x^{(\alpha)},x^{(\beta)})$. By \cref{thm:main} applied with
$r=2$, these values separate the orbits of $S_n$ on $\R^{2n}$, hence
determine
$[(x^{(\alpha)},x^{(\beta)})]=\proj_{\{\alpha,\beta\}}([M])$.
\end{proof}

\subsection{Pairwise values suffice precisely for $n\le3$}

The positive part rests on two combinatorial lemmas. Passing from the
basic values to the pairwise projections (\cref{cor:pairwise})
uses the real structure through the main theorem, but recovering a
multiset from its pairwise projections, the content of the two lemmas,
uses only equality of coordinates; we therefore state them for
arbitrary sets $X_\alpha$. The case of two points is straightforward.

\begin{lemma}\label{lem:twopoints}
Let $r\ge2$, let $X_1,\dots,X_r$ be sets, and let $A,B$ be multisets
of two points of $X_1\times\dots\times X_r$ whose pairwise projections
coincide. Then $A=B$.
\end{lemma}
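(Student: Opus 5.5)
The plan is to reconstruct $A$ from its pairwise projections directly, splitting according to how many coordinates separate its two points. Write $A=\{a,a'\}$ with $a=(a_1,\dots,a_r)$ and $a'=(a'_1,\dots,a'_r)$, and let $D=\{\alpha:a_\alpha\neq a'_\alpha\}$ be the set of coordinates where they differ. The single-coordinate projections $\proj_{\{\alpha\}}$ are themselves determined by any pairwise projection containing $\alpha$ (as $r\ge2$). So $A$ and $B$ agree coordinatewise as multisets: $B=\{b,b'\}$ with $\{b_\alpha,b'_\alpha\}=\{a_\alpha,a'_\alpha\}$ for every $\alpha$. In particular $D$ is also exactly the set of coordinates where $b$ and $b'$ differ, and for $\alpha\notin D$ all four entries coincide.

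\textbf{Case $|D|\le1$.} If $D=\emptyset$, both multisets are the doubled point $\{a,a\}$. If $D=\{\alpha\}$, then both multisets consist of the two points having the common values off $\alpha$ and the values $a_\alpha,a'_\alpha$ at $\alpha$. Hence $A=B$.

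\textbf{Case $|D|\ge2$.} Fix $\delta\in D$. After swapping the names $b,b'$ we may assume $b_\delta=a_\delta$, so $b'_\delta=a'_\delta$.
\begin{itemize}
\item Take any other $\gamma\in D$. The projection $\proj_{\{\delta,\gamma\}}(A)$ is $\{(a_\delta,a_\gamma),(a'_\delta,a'_\gamma)\}$.
\item Since $a_\delta\neq a'_\delta$, the unique element of this projection with first coordinate $a_\delta$ has second coordinate $a_\gamma$.
\item The same projection of $B$ contains $(b_\delta,b_\gamma)=(a_\delta,b_\gamma)$, so $b_\gamma=a_\gamma$, and therefore $b'_\gamma=a'_\gamma$.
\end{itemize}
Together with the coordinates outside $D$, this gives $b=a$ and $b'=a'$, so $A=B$.

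The only point requiring care is that the matching between $B$ and $A$ is fixed once, by the single reference coordinate $\delta$, and then propagated to every other coordinate through the pairs $\{\delta,\gamma\}$. This is why all pairwise projections through $\delta$ are used, and it avoids any inconsistency between choices made separately for different coordinates. The rest is bookkeeping with multiplicities, which is harmless since each multiset has only two elements.
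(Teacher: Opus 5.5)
Your proof is correct and follows essentially the same route as the paper's: both use the single-coordinate projections, fix the labelling of $B$ by one coordinate $\delta$ with $a_\delta\ne a'_\delta$, and propagate it through the pairwise projections $\proj_{\{\delta,\gamma\}}$. The only difference is bookkeeping: the paper applies the pairwise step to every other coordinate, which makes your separate case $|D|=1$ unnecessary.
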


\begin{proof}
Write $A=\{a,a'\}$ and $B=\{b,b'\}$. For each coordinate $\alpha$,
projecting onto the single coordinate $\alpha$ (through any pairwise
projection involving $\alpha$, which exists since $r\ge2$) yields
\begin{equation}\label{eq:single}
\{a_\alpha,a'_\alpha\}=\proj_{\{\alpha\}}(A)=\proj_{\{\alpha\}}(B)
=\{b_\alpha,b'_\alpha\}.
\end{equation}

If $a=a'$, then \eqref{eq:single} gives $\{b_\alpha,b'_\alpha\}
=\{a_\alpha,a_\alpha\}$ for every $\alpha$, so $b=b'=a$ and $B=A$.

Suppose $a\ne a'$, and fix a coordinate $\gamma$ with
$a_\gamma\ne a'_\gamma$. By \eqref{eq:single} the elements of
$\{b_\gamma,b'_\gamma\}=\{a_\gamma,a'_\gamma\}$ are distinct, so we may
label the points of $B$ so that $b_\gamma=a_\gamma$ and
$b'_\gamma=a'_\gamma$. For every other coordinate $\alpha$, the equal
multisets $\proj_{\{\gamma,\alpha\}}(A)$ and $\proj_{\{\gamma,\alpha\}}(B)$
each contain exactly one point with $\gamma$-coordinate $a_\gamma$ (as
$a_\gamma\ne a'_\gamma$), coming from $a$ in $A$ and from $b$ in $B$;
equality forces $b_\alpha=a_\alpha$, and the other point of each
multiset --- the one with $\gamma$-coordinate $a'_\gamma$, namely $a'$
in $A$ and $b'$ in $B$ --- gives $b'_\alpha=a'_\alpha$. Hence $b=a$,
$b'=a'$, and $B=A$.
\end{proof}

\begin{lemma}\label{lem:threepoints}
Let $X_1,X_2,X_3$ be sets and let $A,B$ be multisets of three points of
$X_1\times X_2\times X_3$ whose three pairwise projections coincide.
Then $A=B$.
\end{lemma}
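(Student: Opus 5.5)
The plan is to peel off one point that is pinned down by a coordinate value occurring only once, and then finish with \cref{lem:twopoints}. Write $A=\{a,a',a''\}$ and $B=\{b,b',b''\}$. As in the proof of \cref{lem:twopoints}, projecting any pairwise projection onto a single coordinate shows that $\proj_{\{\gamma\}}(A)=\proj_{\{\gamma\}}(B)$ for each $\gamma\in\{1,2,3\}$. For a fixed $\gamma$, the multiset $\proj_{\{\gamma\}}(A)$ has three elements. Its multiplicity pattern is therefore $(3)$, $(2,1)$ or $(1,1,1)$. In every case except $(3)$, some value $v\in X_\gamma$ occurs exactly once.

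\emph{Degenerate case.} Suppose that for every $\gamma$ the pattern is $(3)$, that is, all three points of $A$ share each coordinate. Then $A=\{a,a,a\}$. The single-coordinate projections of $B$ are $\{a_\gamma,a_\gamma,a_\gamma\}$, so every point of $B$ equals $a$, and hence $B=A$.

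\emph{Main case.} Otherwise fix $\gamma$ and a value $v$ occurring exactly once in $\proj_{\{\gamma\}}(A)=\proj_{\{\gamma\}}(B)$. Label the points so that $a$ is the unique point of $A$ with $a_\gamma=v$ and $b$ is the unique point of $B$ with $b_\gamma=v$. For each $\alpha\ne\gamma$, the equal multisets $\proj_{\{\gamma,\alpha\}}(A)$ and $\proj_{\{\gamma,\alpha\}}(B)$ each contain exactly one pair with first entry $v$. That pair is $(v,a_\alpha)$ on one side and $(v,b_\alpha)$ on the other, so $b_\alpha=a_\alpha$. Hence $b=a$.

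Next remove $a$ from $A$ and $b$ from $B$. For every pair $S=\{\alpha,\beta\}$ we have $\proj_S(A\setminus\{a\})=\proj_S(A)\setminus\{\proj_S(a)\}$, a multiset difference, and the same holds for $B$. Since $\proj_S(a)=\proj_S(b)$, the two-point multisets $A\setminus\{a\}$ and $B\setminus\{b\}$ have coinciding pairwise projections. As $r=3\ge2$, \cref{lem:twopoints} applies and gives $A\setminus\{a\}=B\setminus\{b\}$, and therefore $A=B$.

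The only delicate point is the existence of a coordinate value of multiplicity one, which is a pigeonhole fact special to three points. It fails for four points, where the pattern $(2,2)$ is possible, and this is consistent with the $n=4$ obstruction mentioned in the introduction. The remaining care is bookkeeping: the removal step must be carried out with multiplicities, which is why it is phrased as a multiset difference.
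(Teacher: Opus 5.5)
Your proof is correct, and it takes a genuinely different route from the paper's. The paper aligns $B$ with $A$ through the $\{1,2\}$-projection and extracts permutations $\pi,\rho\in S_3$ from the $\{1,3\}$- and $\{2,3\}$-projections. It then runs a case analysis over $\pi$ (identity, $3$-cycle, transposition) and over the six choices of $\rho$, producing one permutation $\tau$ that matches the rows. That argument is self-contained and does not use \cref{lem:twopoints}. Your peeling argument replaces the enumeration with a single pigeonhole observation: a three-element multiset that is not of pattern $(3)$ has a value of multiplicity one. That value pins down one point through the projections $\proj_{\{\gamma,\alpha\}}$, and the remaining two points are handed to \cref{lem:twopoints}. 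Your multiset-difference bookkeeping is sound, since projection is additive on multisets. The main gain of your route is that nothing in it uses that there are exactly three coordinates. Since \cref{lem:twopoints} holds for every $r\ge2$, the same argument shows directly that three-point multisets in $X_1\times\dots\times X_r$ are determined by their pairwise projections. This would make the induction on $r$ and the grouping into blocks in the proof of \cref{prop:n3} unnecessary. It also shows where such an argument must break down at $n=4$, namely when no coordinate has a singleton value, as in \eqref{eq:paritybox}. The paper's version, in exchange, gives an explicit row-matching permutation and does not depend on \cref{lem:twopoints}.
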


\begin{proof}
Write $A=\{(x_i,y_i,z_i):i\in\{1,2,3\}\}$. Since
$\proj_{\{1,2\}}(A)=\proj_{\{1,2\}}(B)$, we may index $B$ as
$B=\{(x_i,y_i,z'_i):i\in\{1,2,3\}\}$. From
$\proj_{\{1,3\}}(A)=\proj_{\{1,3\}}(B)$ there is $\pi\in S_3$ with
\begin{equation}\label{eq:pi}
x_i=x_{\pi(i)},\qquad z'_i=z_{\pi(i)}\qquad(i\in\{1,2,3\}),
\end{equation}
and from $\proj_{\{2,3\}}(A)=\proj_{\{2,3\}}(B)$ there is $\rho\in S_3$ with
\begin{equation}\label{eq:rho}
y_i=y_{\rho(i)},\qquad z'_i=z_{\rho(i)}\qquad(i\in\{1,2,3\}).
\end{equation}
It suffices to find $\tau\in S_3$ with $x_i=x_{\tau(i)}$,
$y_i=y_{\tau(i)}$ and $z'_i=z_{\tau(i)}$ for all $i$; then
$(x_i,y_i,z'_i)=(x_{\tau(i)},y_{\tau(i)},z_{\tau(i)})$ and $B=A$. Write
$\mathrm{id}$ for the identity permutation.

If $z'_i=z_i$ for all $i$, then we may take $\tau=\mathrm{id}$: $x_i=x_i$
and $y_i=y_i$ hold by the indexing of $B$, and $z'_i=z_i$ by assumption.

Otherwise $z'_j\ne z_j$ for some $j$, so $\pi\ne\mathrm{id}$ by
\eqref{eq:pi}, and $\pi$ is either a $3$-cycle or a transposition.

\noindent
(i) \emph{$\pi$: a $3$-cycle.} The equalities $x_i=x_{\pi(i)}$ in
\eqref{eq:pi} force $x_1=x_2=x_3$, and we may take $\tau=\rho$, by
\eqref{eq:rho}.

\noindent
(ii) \emph{$\pi$: a transposition.} After a common relabelling of the indices
we may assume $\pi=(1\,2)$, so $x_1=x_2$ and
$(z'_1,z'_2,z'_3)=(z_2,z_1,z_3)$; as $z'_j\ne z_j$ for some $j$, this
forces $z_1\ne z_2$. We claim
$y_1=y_2$; then we may take $\tau=\pi$. The equalities $z'_i=z_{\rho(i)}$
in \eqref{eq:rho}, combined with $(z'_1,z'_2,z'_3)=(z_2,z_1,z_3)$, read
$z_{\rho(1)}=z_2$, $z_{\rho(2)}=z_1$, $z_{\rho(3)}=z_3$. Running
through the six $\rho\in S_3$: $\rho=\mathrm{id}$, $(1\,3)$ and $(2\,3)$
each force $z_1=z_2$, a contradiction; $\rho=(1\,2)$ gives
$y_1=y_{\rho(1)}=y_2$ directly; and each of the two $3$-cycles forces
$y_1=y_2=y_3$ through the equalities $y_i=y_{\rho(i)}$ in
\eqref{eq:rho}. In every feasible case $y_1=y_2$, as claimed.
\end{proof}

\begin{proposition}\label{prop:n3}
Let $r\ge2$ and $n\le3$. The basic values supported on two columns
separate the orbits of $S_n$ on $\R^{nr}$. For $n\ge4$ they do not,
already for $r=3$.
\end{proposition}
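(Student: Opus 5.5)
The positive part reduces, by \cref{cor:pairwise}, to a purely combinatorial statement. If $M,\bar M$ have the same basic values $b_c$ for all $c$ with $|\supp(c)|\le2$, then all pairwise projections $\proj_{\{\alpha,\beta\}}([M])$ and $\proj_{\{\alpha,\beta\}}([\bar M])$ coincide. It then suffices to show that a multiset of $n\le3$ points of $X_1\times\dots\times X_r$ ($r\ge2$) is determined by its pairwise projections. The case $n=1$ is trivial, since single-coordinate projections give the point, and $n=2$ is \cref{lem:twopoints}.

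For $n=3$ and $r=3$ this is \cref{lem:threepoints}. For general $r\ge2$ I would argue directly by cases on the coordinate patterns. Write $A=[M]$ and $B=[\bar M]$; these have equal single-coordinate projections. Each coordinate $\gamma$ falls into one of three cases: all three values equal, all three distinct, or exactly two equal (pattern $a,a,b$ with $a\ne b$).
\begin{itemize}
\item \emph{Some coordinate $\gamma$ has three distinct values.} Label the points of $A$ and $B$ by their $\gamma$-coordinate. For each $\alpha$, the multiset $\proj_{\{\gamma,\alpha\}}$ pairs each $\gamma$-value with a unique $\alpha$-value, so corresponding points agree in every coordinate.
\item \emph{Some coordinate $\gamma$ has pattern $a,a,b$.} The unique point $p\in A$ and $q\in B$ with $\gamma$-coordinate $b$ agree in every coordinate $\alpha$, by the same argument with $\proj_{\{\gamma,\alpha\}}$. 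Removing the now-known pair $(p_\alpha,p_\beta)$ from $\proj_{\{\alpha,\beta\}}(A)=\proj_{\{\alpha,\beta\}}(B)$ shows that the remaining two-point multisets $A\setminus\{p\}$ and $B\setminus\{p\}$ have equal pairwise projections. \Cref{lem:twopoints} then gives $A=B$.
\item \emph{Every coordinate is constant.} Then $A=B$ trivially.
\end{itemize}
The only care needed is that multiset subtraction of projections is legitimate, which it is since the projections are counted with multiplicity.

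For $n\ge4$ I would use the discrete-tomography switching configuration
\[
A=\{(0,0,0),(1,1,0),(1,0,1),(0,1,1)\},\qquad B=\{(1,0,0),(0,1,0),(0,0,1),(1,1,1)\}.
\]
Each of the three pairwise projections of either set is $\{(0,0),(1,1),(1,0),(0,1)\}$, yet $A\ne B$. For $n>4$, append $n-4$ identical rows, say $(2,2,2)$, to both sets. The basic values $b_c$ with $|\supp(c)|\le2$ depend only on the orbit of the corresponding two-column submatrix, since they are the basic $2$-symmetric values of that $n\times2$ matrix. These orbits coincide for the two matrices, so all pairwise basic values agree while the orbits differ. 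If a statement for larger $r$ is wanted, pad both matrices with identical zero columns.

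The main obstacle is the $n=3$ gluing step for general $r$: \cref{lem:threepoints} only covers three coordinates, and triple-wise agreement does not obviously glue. The case split above, which either isolates all three points or peels off one point and falls back on \cref{lem:twopoints}, is what I would try first, and it appears to avoid the issue.
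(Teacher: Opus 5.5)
Your proposal is correct. The reduction via \cref{cor:pairwise}, the cases $n=1,2$, and the negative part (the parity-box pair, padded with common rows and zero columns) match the paper; the genuine difference is the case $n=3$.

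The paper first proves \cref{lem:threepoints} by chasing the permutations $\pi,\rho\in S_3$ that realise the $\{1,3\}$- and $\{2,3\}$-projections. It then reaches general $r$ by induction, grouping the coordinates into the blocks $\{1,\dots,r-2\}$, $\{r-1\}$, $\{r\}$ so that the lemma applies to the $(r-1)$-coordinate projections supplied by the inductive hypothesis.

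Your case split on the value pattern of a single coordinate $\gamma$ is valid over arbitrary sets $X_\alpha$ and handles every $r\ge2$ at once. A coordinate with three distinct values matches all points through the $\{\gamma,\alpha\}$-projections. A coordinate with pattern $a,a,b$ identifies the point carrying $b$ in both multisets. The multiset subtraction is then legitimate and reduces you to \cref{lem:twopoints}. In particular, your argument reproves \cref{lem:threepoints} and makes both it and the induction unnecessary.

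It also explains the threshold. Peeling off a point works whenever some coordinate has a value of multiplicity one, which for three points is automatic unless every coordinate is constant. In the parity box, by contrast, every coordinate value occurs exactly twice.

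What the paper's route buys is a reduction independent of $n$. The block-grouping induction shows that, for any fixed $n$, if pairwise projections determine $n$-point multisets in three coordinates over arbitrary sets, they do so in any number of coordinates.
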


\begin{proof}
Suppose all pairwise-supported values agree for $M$ and $\bar M$; by
\cref{cor:pairwise}, $\proj_{\{\alpha,\beta\}}([M])
=\proj_{\{\alpha,\beta\}}([\bar M])$ for all distinct $\alpha,\beta$. For $n=1$
the single row is its own orbit and is recovered from the values
$b_{e_\alpha}$. For $n=2$, apply \cref{lem:twopoints}. For
$n=3$, we prove by induction on $r\ge2$ that two multisets of three
points of $X_1\times\dots\times X_r$ with equal pairwise projections
are equal; the base case $r=2$ is trivial. For $r\ge3$ and each
$(r-1)$-element subset $S$ of coordinates, the projections
$\proj_S([M])$ and $\proj_S([\bar M])$ have equal pairwise
projections, hence they are equal by induction. Now apply
\cref{lem:threepoints}, grouping the $r$ coordinates into the three
blocks $\{1,\dots,r-2\}$, $\{r-1\}$ and $\{r\}$ (so that
$X_1\times\dots\times X_r=(X_1\times\dots\times X_{r-2})\times
X_{r-1}\times X_r$): its three pairwise hypotheses are the equalities
of the projections to $\{1,\dots,r-1\}$ and to $\{1,\dots,r-2,r\}$
(both by the inductive step) and to $\{r-1,r\}$ (a pairwise
projection).

For the negative part, consider the multisets\footnote{A standard
non-uniqueness example in discrete tomography \cite{GG,HT}.}
\begin{equation}\label{eq:paritybox}
\begin{aligned}
A&=\{(0,0,0),(0,1,1),(1,0,1),(1,1,0)\},\\
B&=\{(0,0,1),(0,1,0),(1,0,0),(1,1,1)\}
\end{aligned}
\end{equation}
(in $A$ the third coordinate is the parity of the first two and in $B$
its negation; equivalently, $A$ and $B$ are the even and odd vertices of
the cube).
All three pairwise projections of $A$ and $B$ coincide, while
$A\ne B$. One checks directly that $b_{2,1,1}(A)=3\ne4=b_{2,1,1}(B)$,
in accordance with \cref{thm:main}; incidentally
$b_{1,1,1}(A)=b_{1,1,1}(B)=3$, so the lowest-degree genuinely
three-column invariant does not detect this pair. Padding $A$ and $B$
with common rows gives examples for every $n\ge4$, and appending zero
columns gives examples for every $r\ge3$.
\end{proof}

\subsection{Top-degree values and the difference matrix}

For $|c|=n$ the sets $S_1,\dots,S_r$ in \eqref{eq:bvalue} partition
$\{1,\dots,n\}$, which yields a second exact interpretation of a
subfamily of the basic values. Let the \emph{difference matrix}
$M^\Delta$ be the $n\times(r-1)$ matrix whose $i$-th row is
$\bigl(m_{i2}-m_{i1},\dots,m_{ir}-m_{i1}\bigr)$.

\begin{proposition}\label{prop:topdegree}
Let $r\ge2$. The top-degree basic values $\{b_c(M):|c|=n\}$ determine,
and are determined by, the pair consisting of the first-column sum
$b_{ne_1}(M)=\sum_{i=1}^n m_{i1}$ and the orbit $[M^\Delta]$.
\end{proposition}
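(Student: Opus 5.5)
The plan rests on one identity. For $|c|=n$ the sets in \eqref{eq:bvalue} partition $\{1,\dots,n\}$, so every row contributes its column-$1$ entry plus a difference:
\[
\sum_{\alpha}\sum_{i\in S_\alpha}m_{i\alpha}
=\sum_{i=1}^n m_{i1}+\sum_{\alpha=2}^{r}\sum_{i\in S_\alpha}(m_{i\alpha}-m_{i1}).
\]
Maximizing over partitions with $|S_\alpha|=c_\alpha$ gives
\[
b_c(M)=b_{ne_1}(M)+b'_{c'}(M^\Delta),\qquad c'=(c_2,\dots,c_r),
\]
where $b'$ denotes the basic $(r-1)$-symmetric values of the $n\times(r-1)$ matrix $M^\Delta$. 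Here $S_1$ is simply the set of rows not used, so $c'$ ranges over all of $\{c'\in\N^{r-1}:|c'|\le n\}$ exactly as $c$ ranges over $\{c:|c|=n\}$, with $c_1=n-|c'|$. The case $c'=0$ gives $b'_0=0$, consistent with $b_{ne_1}(M)=\sum_i m_{i1}$.

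\emph{Top-degree values determine the pair.} The value $b_{ne_1}(M)$ is itself one of the top-degree values, so the column sum is known. Subtracting it from each $b_c$ with $|c|=n$ yields all basic values $b'_{c'}(M^\Delta)$ for $|c'|\le n$. By \cref{thm:main} applied with $r-1\ge1$ columns, these values determine $[M^\Delta]$.

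\emph{The pair determines the top-degree values.} The orbit $[M^\Delta]$ determines every $b'_{c'}(M^\Delta)$, since basic values are $S_n$-invariant. The identity then reconstructs each $b_c(M)$ with $|c|=n$ from $b_{ne_1}(M)$ and these.

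The only step that needs care is verifying the bijection $c\leftrightarrow c'$ and the identity itself: one must check that maximizing over partitions of $\{1,\dots,n\}$ matches maximizing over disjoint $S_2,\dots,S_r$ with the prescribed sizes. This holds because $S_1$ is forced to be the complement. The rest is immediate from \cref{thm:main}.
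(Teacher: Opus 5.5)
Your proposal is correct and is essentially the paper's own proof. You use the same identity $b_c(M)=b_{ne_1}(M)+b_{\tilde c}(M^\Delta)$, obtained by adding and subtracting $m_{i1}$ over a partition, the same correspondence $c\leftrightarrow\tilde c$ with $S_1$ forced to be the complement, and the same appeal to \cref{thm:main} with $r-1$ columns, with the converse read off directly from the identity.
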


\begin{proof}
For $|c|=n$, writing $c=(c_1,\tilde c)$ with
$\tilde c=(c_2,\dots,c_r)$, every row lies in exactly one $S_\alpha$;
adding and subtracting $m_{i1}$ on each $S_\alpha$ gives
\[
b_c(M)=\max\Bigl\{\sum_{i=1}^n m_{i1}
+\sum_{\alpha=2}^{r}\sum_{i\in S_\alpha}(m_{i\alpha}-m_{i1})\Bigr\}
=b_{ne_1}(M)+b_{\tilde c}(M^\Delta).
\]
Here the first sum is constant, since the $S_\alpha$ partition
$\{1,\dots,n\}$, so it factors out of the maximum; maximizing the
second sum over disjoint $S_2,\dots,S_r$ of sizes $c_2,\dots,c_r$ is
exactly $b_{\tilde c}(M^\Delta)$. As $c$ ranges over $\{c:|c|=n\}$,
$\tilde c$ ranges over all of $\{\tilde c\in\N^{r-1}:|\tilde c|\le n\}$.
In particular $b_{ne_1}(M)$ is the top-degree value with $\tilde c=0$,
so the top-degree values determine $b_{ne_1}(M)$ and all basic values
of $M^\Delta$, hence $[M^\Delta]$ by \cref{thm:main} (applied
with $r-1$ columns). The converse direction is immediate from the displayed
identity.
\end{proof}

The reduction $b_c(M)=b_{ne_1}(M)+b_{\tilde c}(M^\Delta)$ already
appears in the proof of \cite[Proposition~14]{Kubo}, where it yielded
separation only on the generic set of matrices for which the
difference of some two columns has pairwise distinct entries;
the present proposition removes that restriction. The choice of column $1$ as
reference is immaterial: the analogous statement holds with any column
taken as the reference.

Geometrically, $[M^\Delta]$ is the projection of the row multiset
along the direction $(1,\dots,1)$. Together with
\cref{cor:pairwise}, the two subfamilies record complementary
projections of the row multiset: the pairwise values give its
projections to the coordinate planes, while the
top-degree values give its projection along $(1,\dots,1)$. For $r=3$
each coordinate-plane projection forgets a single coordinate, so the
three of them are the X-rays along $e_1,e_2,e_3$, which with the
diagonal give the four directions of a discrete tomography problem, and
even together they still do not determine the orbit:

\begin{example}\label{ex:switching}
Let $n=8$, $r=3$, and take the two $8$-point
multisets\footnote{As with \eqref{eq:paritybox}, $A$ and $B$ are such a
non-uniqueness example, here for the four directions
$e_1,e_2,e_3,(1,1,1)$.}
\begin{align*}
A&=\{(0,0,0),(1,1,0),(1,0,1),(0,1,1),(2,1,1),(1,2,1),(1,1,2),(2,2,2)\},\\
B&=\{(1,0,0),(0,1,0),(0,0,1),(1,1,1),(1,1,1),(2,2,1),(2,1,2),(1,2,2)\}.
\end{align*}
One checks directly that $A$ and $B$ have the same three pairwise
projections $\proj_{\{\alpha,\beta\}}$ and the same diagonal projection
$[M^\Delta]$. Hence by
\cref{cor:pairwise} and \cref{prop:topdegree},
\emph{all pairwise-supported and all top-degree basic values of $A$ and
$B$ coincide}. Nevertheless $b_{1,1,2}(A)=8\ne7=b_{1,1,2}(B)$. By definition
$b_{1,1,2}(M)$ selects four distinct rows and adds the first entry of
one, the second entry of another, and the third entries of the
remaining two, maximizing the total. In $A$ the four distinct rows
$(2,1,1),(1,2,1),(1,1,2),(2,2,2)$ supply the value $2$ in all four
slots, giving $2+2+2+2=8$. In $B$ the only rows with third entry $2$
are $(2,1,2)$ and $(1,2,2)$, so both are forced into the two
third-entry slots; among the remaining rows only $(2,2,1)$ still
carries a $2$ (in its first and second entries), and it can fill just
one of the two remaining slots, so the other yields at most $1$. The
maximum is therefore $2+2+2+1=7$. Padding $A$ and $B$ with common rows
gives such configurations for every $n\ge8$.
Thus any separating family must include a value supported on at least
three columns and of degree less than $n$: neither the pairwise values
(those $b_c$ with $\supp(c)\subseteq\{\alpha,\beta\}$ for some pair
$\{\alpha,\beta\}$, of any degree) nor the top-degree values
($|c|=n$), nor their union, separates orbits in general.
\end{example}

\section{Concluding remarks}\label{sec:discussion}

\paragraph{Size and degree of the separating family}
\Cref{thm:main} exhibits a separating family of
$\binom{n+r}{r}$ invariants of degree at most $n$ for the row
permutation action of $S_n$ on $\R^{nr}$. The general construction of
\cite{Derksen} for this action yields $nr+(nr)!/n!$ separating
invariants of degree at most $\max(nr,\binom{nr}2)$. The basic family is thus
considerably smaller and of much lower degree, but it is not minimal as
it stands --- $b_0$ is constant, and by \cref{prop:n3} the values
supported on at least three columns are redundant when $n\le3$ --- and several
questions about economy remain open. First, \emph{which} basic values
are necessary: by \cref{prop:n3} and
\cref{ex:switching} any separating subfamily must contain
values supported on at least three columns and of degree less than $n$, yet a
precise description of the minimal separating subfamilies
is unknown. Second, \emph{how few} invariants of any kind suffice:
since $\R^{nr}/S_n$ is $nr$-dimensional, any continuous separating map
needs at least $nr$ coordinates, leaving a wide gap to the
$\binom{n+r}{r}$ basic values, and the minimal cardinality of a
separating family is not known. Third, beyond such counts, whether the basic family admits a
structural characterization --- for instance, minimality among families of
invariants that are defined uniformly in $n$ and symmetric in the $r$
columns --- remains to be formulated. Since the main theorem shows the
basic family to be a complete set, generalizing the tropical
elementary symmetric polynomials to which it reduces when $r=1$, it
would now be reasonable, following the suggestion of \cite{Kubo}, to
call these polynomials \emph{elementary}.

\paragraph{Rational invariants}
The second question raised in the Discussion of \cite{Kubo} --- whether
the basic polynomials generate the semifield of $r$-symmetric tropical
rational functions --- remains open. Derksen
\cite[Theorem~1.2]{Derksen} shows that the invariant semifield is
generated by invariant polynomials of degree at most
$N p_1p_2\cdots p_k$ ($N=nr$, $k=n!$, $p_i$ the $i$-th prime), so a
finite generating set exists; whether the basic family suffices is an
attractive test case.

\paragraph{Bi-Lipschitz embeddings}
Since $\Phi$ is an injective max filter bank \eqref{eq:maxfilter}, it
is a bi-Lipschitz embedding for all $n$ and $r$
(\cref{cor:bilipschitz}), by the bi-Lipschitz theory of max filtering
\cite{BT,CIMP}. What remains open is \emph{quantitative}: the lower
bound of \cite{BT} for $r\ge2$ comes from a compactness argument and
does not exhibit the constant, whereas for $r=1$ we have the explicit
dimension-free value $c_{n,1}=\tfrac12$ (\cref{prop:stability}). Is
there an explicit $c_{n,r}>0$ with $c_{n,r}\,d([M],[\bar M])\le
\|\Phi(M)-\Phi(\bar M)\|_2$ for all $M,\bar M$, and how does the
optimal distortion grow with $n$ and $r$? The reconstruction of
\cref{sec:main} passes through the nonlinear steps
$\Phi\mapsto G_M\mapsto N_M\mapsto[M]$, so a quantitative inverse bound
does not follow from injectivity alone, which is only qualitative;
obtaining it requires controlling the stability of these nonlinear
steps. We leave this --- and its relevance to
permutation-invariant coordinates for point multisets \cite{CIMP,CK}
--- for future work.

\paragraph{Empirical distribution and order statistics}
For $r=1$, \cref{rem:r1} recovers the counting function $N_M$
($n$ times the empirical distribution function of $x_1,\dots,x_n$) from
the basic values, which are thus equivalent to the order statistics of
the data. For general $r$, the same reconstruction returns
$N_M(\lambda)=\#\{i:\ m_{i\alpha}\le\lambda_\alpha\ (\forall\alpha)\}$,
that is, $n$ times the multivariate (joint) empirical distribution
function of the $r$-attribute data, obtained from the transform
$G_M(\lambda)=\sum_i\max\bigl(0,\max_\alpha(m_{i\alpha}-\lambda_\alpha)\bigr)$.
Thus the basic values $b_c(M)$ form a \emph{complete invariant} of the
multiset: they determine its joint empirical distribution, and hence
the data itself. This admits a statistical reading: if the rows are
regarded as a sample from an unknown distribution $Q$ on $\R^r$, the
likelihood depends on the data only through the multiset of rows, so
any quantity that determines the multiset --- in particular the basic
values $b_c(M)$ --- is a \emph{sufficient statistic} for $Q$, in the
spirit of the tropical sufficient statistics of \cite{MKPC}. This is
the multivariate counterpart of the order-statistics interpretation in
the single-attribute case \cite{Kubo,Matsui}: while $\R^r$ admits no
canonical notion of order statistics, the joint empirical distribution
function is canonical, and the basic values recover it as the counting function $N_M$, obtained from the transform $G_M$ (\cref{lem:counting}). A
finer study of the basic values as stable, complete features for
multi-attribute data we leave for future work.

\bigskip
\begin{center}
\emph{Dedicated to the memory of Kiyoshi Matsui.}
\end{center}


\subsection*{Declarations}

During the preparation of this work the author used Claude (Anthropic)
to refine the English exposition and to check the clarity of the
presentation. After using this tool, the author reviewed and edited the
content as needed. The author assumes responsibility for all content.

\bibliographystyle{siamplain}
\bibliography{references}

\end{document}